\documentclass[12pt]{article}

 
\usepackage{amsfonts,amsmath,amsthm}
\usepackage{enumerate}
\usepackage{a4wide}

\newcommand{\ttz}{\mathtt{z}}
\newcommand{\sur}{\mathbin{/}}
\newcommand{\twotwo}[4]
{ \begin{bmatrix} #1 & #2 \\ #3 & #4 \end{bmatrix}  }
\newcommand{\poly}{\mathop{\mathrm{poly}}}

\newcommand{\floor}[1]{\left\lfloor #1 \right\rfloor}

\newcommand{\N}{\mathbb{N}}
\newcommand{\Nast}{\mathbb{N}^\ast}
\newcommand{\Q}{\mathbb{Q}} 
\newcommand{\C}{\mathbb{C}} 
 
\newcommand{\defeq}{\mathrel{\mathop:}=}

\newtheorem{definition}{Definition}
\newtheorem{proposition}{Proposition}
\newtheorem{lemma}{Lemma}
\newtheorem{property}{Property}
\newtheorem{corollary}{Corollary}

\begin{document}

\title{A simple, polynomial-time algorithm for the matrix torsion problem}
\author{Fran\c{c}ois Nicolas} 
\maketitle

\begin{abstract}
The Matrix Torsion Problem (MTP) is:  
given a square matrix $M$ with rational entries, decide whether
two distinct powers of $M$ are equal.
It has been shown by Cassaigne and the author that the MTP  reduces to the Matrix Power Problem (MPP) in polynomial time \cite{CassaigneN08}: 
given two square matrices $A$ and $B$ with rational entries, the  MTP is to decide whether $B$ is a power of $A$.
Since the MPP is decidable in polynomial time \cite{KannanL86}, 
it is also the case of the MTP.
However, the algorithm for  MPP is highly non-trivial.
The aim of this note is to present a simple, direct,  polynomial-time algorithm for the MTP.
\end{abstract} 

\section{Introduction}

As usual  $\N$, $\Q$ and $\C$ denote 
the semiring of non-negative integers, 
the field of rational numbers and the field of complex numbers, 
respectively.

\begin{definition}[Torsion] 
Let $M$ be a  square matrix over $\C$.
We say that $M$ is \emph{torsion} if it satisfies the following three equivalent assertions.
\begin{enumerate}[$(i)$.]
\item There exist   $p$, $q \in \N$ such that $p \ne q$ and  $M^p = M^q$.
\item The multiplicative  semigroup $\left\{ M, M^2, M^3, M^4, \dotsc \right\}$ has  finite cardinality.
\item The sequence $(M, M^2, M^3, M^4, \dotsc)$ is eventually periodic.
\end{enumerate}
\end{definition}

The aim of this note is to present a polynomial-time algorithm for the following decision problem:

\begin{definition}
The \emph{Matrix Torsion Problem (MTP)} is:
given as input a square matrix $M$ over $\Q$, decide whether $M$ is torsion.
\end{definition} 

For every square matrix $M$ over $\Q$, the size of $M$ is defined as the order of $M$ plus the sum of the lengths of the binary encodings over all entries of $M$.

\paragraph{Previous work.}
The Matrix Power Problem (MPP)  is:  given two square matrices $A$ and $B$ over $\Q$ decide whether there exists $n \in \N$ such that  such that $A^n = B$.
Kannan and Lipton showed that the MPP is decidable in polynomial time \cite{KannanL86}.
It is rather easy to prove that the MTP is decidable in polynomial time by reduction to  the MPP (see Section~\ref{sec:comment}).
However, the original algorithm for the MPP is highly non-trivial.
Hence, a simple, direct algorithm is still interesting.

\section{Generalities}

Throughout this paper, $\ttz$ denotes an indeterminate.
The next result is well-known and plays a crucial role in the paper. 
  
\begin{proposition}[Multiple root elimination] \label{prop:mult-root}
For every polynomial $\nu(\ttz)$ over $\C$, 
the polynomial 
$$
\pi(\ttz) \defeq \dfrac{\nu(\ttz)}{\gcd(\nu'(\ttz), \nu(\ttz))} 
$$
satisfies the following two properties:
\begin{itemize} 
\item $\nu(\ttz)$ and $\pi(\ttz)$ have the same complex roots, and
\item $\pi(\ttz)$ has no multiple roots.
\end{itemize} 
\end{proposition}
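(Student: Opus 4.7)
The plan is to factor $\nu(\ttz)$ completely over $\C$ and then read off the multiplicities of each root in $\nu'(\ttz)$, so that the gcd becomes explicit. Concretely, I would start by invoking the fundamental theorem of algebra to write
\[
\nu(\ttz) = c \prod_{i=1}^{k} (\ttz - \alpha_i)^{m_i},
\]
where $c \in \C \setminus \{0\}$, the $\alpha_i$ are the pairwise distinct complex roots of $\nu(\ttz)$, and each $m_i \in \Nast$.

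The main technical step is to pin down the multiplicity of each $\alpha_i$ as a root of $\nu'(\ttz)$. For this I would fix an index $i$ and write $\nu(\ttz) = (\ttz - \alpha_i)^{m_i} \rho_i(\ttz)$ with $\rho_i(\alpha_i) \ne 0$. Differentiating gives
\[
\nu'(\ttz) = (\ttz - \alpha_i)^{m_i - 1} \bigl[ m_i \rho_i(\ttz) + (\ttz - \alpha_i) \rho_i'(\ttz) \bigr],
\]
and the bracketed factor evaluated at $\alpha_i$ equals $m_i \rho_i(\alpha_i) \ne 0$. Hence $\alpha_i$ has multiplicity exactly $m_i - 1$ in $\nu'(\ttz)$, and one may factor
\[
\nu'(\ttz) = \sigma(\ttz) \prod_{i=1}^{k} (\ttz - \alpha_i)^{m_i - 1}
\]
for some $\sigma(\ttz) \in \C[\ttz]$ with $\sigma(\alpha_i) \ne 0$ for every $i$.

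From here, the conclusion is essentially bookkeeping. Since $\sigma(\ttz)$ shares no root with $\nu(\ttz)$, a standard unique-factorization argument in $\C[\ttz]$ yields
\[
\gcd\bigl(\nu(\ttz), \nu'(\ttz)\bigr) = \prod_{i=1}^{k} (\ttz - \alpha_i)^{m_i - 1},
\]
up to a nonzero scalar that we may absorb into the normalization of the gcd. Dividing then gives $\pi(\ttz) = c \prod_{i=1}^{k} (\ttz - \alpha_i)$, which manifestly has the same complex roots as $\nu(\ttz)$ and no multiple roots. The only mildly delicate point is the identification of the gcd when $\nu$ and $\nu'$ are viewed over whatever coefficient field the gcd is computed in; but since the statement concerns complex roots, it is safe to carry out the whole computation in $\C[\ttz]$, where Bezout's identity makes the gcd unique up to a unit and the preceding factorization determines it.
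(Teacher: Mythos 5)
The paper does not actually prove Proposition~\ref{prop:mult-root}; it declares the result well-known and cites nothing. Your proof is the standard one and is correct: factoring $\nu$ over $\C$, computing the exact multiplicity of each root in $\nu'$ via the product rule, and reading off the gcd from unique factorization in the PID $\C[\ttz]$ gives precisely $\pi(\ttz) = c\prod_i(\ttz-\alpha_i)$ (up to normalization of the gcd), which establishes both claims. Your closing remark about the gcd being insensitive to field extension is also the right thing to flag, since the paper later applies this with $\nu_n \in \Q[\ttz]$ and needs the quotient to again lie in $\Q[\ttz]$; the only case your write-up silently excludes is the degenerate one $\nu = 0$, for which the quotient in the statement is undefined anyway.
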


The set of all positive integers is denoted $\Nast$. 

For every  $n \in \Nast$, the  \emph{$n^\text{th}$ cyclotomic polynomial}  is denoted $\gamma_n(\ttz)$:
$\gamma_n(\ttz) = \prod (\ttz - u)$ where the product is over all primitive $n^\text{th}$ roots of unity $u \in \C$. 
It is well-known that $\gamma_n(\ttz)$ is a monic integer polynomial and that the following three properties hold \cite{LangAlgebra}:

\begin{property}  \label{prop:zm-1}
For every $m \in \Nast$, $\ttz^m - 1 = \prod_{n \in D_m} \gamma_n(\ttz)$, where $D_m$ denotes the set of  all positive divisors of $m$.
\end{property} 

\begin{property} \label{prop:irreductible}
For every $n\in \Nast$, $\gamma_n(\ttz)$ is irreducible over $\Q$.
\end{property}

\emph{Euler's totient function} is denoted $\phi$: for every $n \in \Nast$, $\phi(n)$ equals the number of $k \in  \{ 1, 2, \dotsc, n \}$ such that $\gcd(k, n) = 1$.

\begin{property} \label{prop:degre-phi}
For every $n \in \Nast$, $\gamma_n(\ttz)$ is of degree $\phi(n)$.
\end{property} 

The next lower bound for Euler's totient function is far from optimal.
However,
 it is sufficient for our purpose.
 
\begin{proposition} \label{prop:totient-ineq}
For every $n\in \Nast$,
$\phi(n)$ is greater than or equal to $\sqrt{n \sur 2}$.
\end{proposition}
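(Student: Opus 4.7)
The strategy is to use the multiplicativity of Euler's totient function to reduce the bound to separate inequalities on prime powers, with the prime $2$ treated apart.

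First, I would write $n = 2^a m$ where $a \in \N$ and $m$ is an odd positive integer. Combining the standard formula $\phi(p^k) = p^{k-1}(p-1)$ for prime powers (immediate from the definition of $\phi$, since the integers in $\{1, 2, \dotsc, p^k\}$ coprime to $p^k$ are exactly those not divisible by $p$) with the multiplicativity of $\phi$, namely $\phi(xy) = \phi(x)\phi(y)$ whenever $\gcd(x,y) = 1$ (a consequence of the Chinese Remainder Theorem), one obtains $\phi(n) = \phi(2^a)\phi(m)$.

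Next, I would bound $\phi(m)^2$ from below by $m$. For every odd prime $p$ and every $k \in \Nast$, one has $\phi(p^k)^2 = (p-1)^2 p^{2k-2} \ge p^k$, because $(p-1)^2 \ge p$ holds for every integer $p \ge 3$. Multiplicativity extends this to all odd $m$, giving $\phi(m)^2 \ge m$. Separately, I would verify that $\phi(2^a)^2 \ge 2^a / 2$ for every $a \in \N$: at $a = 0$ this reads $1 \ge 1/2$, and at $a \ge 1$ it reads $4^{a-1} \ge 2^{a-1}$, both obvious.

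Combining these two bounds gives $\phi(n)^2 = \phi(2^a)^2 \phi(m)^2 \ge (2^a / 2) \cdot m = n/2$, and taking square roots yields the claim. The only mildly delicate point is that the inequality $(p-1)^2 \ge p$ used for odd primes fails at $p = 2$; peeling off the $2$-part of $n$ before applying multiplicativity is precisely what bypasses this obstacle, and it also explains why the stated constant is $1/\sqrt{2}$ rather than $1$ (the stronger bound $\phi(n) \ge \sqrt{n}$ would fail at $n = 2$).
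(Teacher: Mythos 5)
Your proof is correct and follows essentially the same path as the paper: both arguments exploit the prime-power formula $\phi(p^k) = p^{k-1}(p-1)$ together with multiplicativity, hinge on the inequality $p - 1 \ge \sqrt{p}$ for primes $p \ge 3$ (your $(p-1)^2 \ge p$), and isolate the exceptional prime power $2^1$ as the sole reason the constant degrades from $1$ to $1/\sqrt{2}$. The only cosmetic difference is that you decompose $n = 2^a m$ uniformly and bound the two factors at once, whereas the paper first proves $\phi(n) \ge \sqrt{n}$ for $n$ odd or divisible by $4$ and then treats $n = 2n'$ with $n'$ odd as a separate case; the underlying content is identical.
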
 

A proof of Proposition~\ref{prop:totient-ineq} can be found in appendix.
Noteworthy is that $\dfrac{\phi(n) \ln \ln n}{n}$ converges to a positive, finite limit as $n$ tends to $\infty$. \cite{HardyWright}.

\section{The new algorithm}

\begin{definition} 
For every $n\in \Nast$, let 
$\displaystyle 
\pi_n(\ttz) \defeq \prod_{j = 1}^n \gamma_j(\ttz)
$.
\end{definition}

To prove that the MTP is decidable in polynomial time, we  prove that 
\begin{itemize}
\item  a  $d$-by-$d$ matrix over $\Q$ is torsion if and only if it  annihilates $\ttz^d \pi_{2 d^2}(\ttz)$, and that 
\item $\ttz^d \pi_{2 d^2}(\ttz)$ is computable from $d$ in $\poly(d)$ time.
\end{itemize}

\begin{proposition} \label{th:critere-torsion}
Let $d$, $n \in \Nast$ be such that for every integer $m$ greater than $n$, $\phi(m)$ is greater than $d$.
For every  $d$-by-$d$ matrix $M$ over $\Q$ the following three assertions are equivalent.
\begin{enumerate}[$(i)$.]
\item $M$ is torsion.
\item $M$ annihilates  $\ttz^d \pi_n(\ttz)$.
\item $M$ satisfies $M^{n! +d } = M^d$.
\end{enumerate}
\end{proposition}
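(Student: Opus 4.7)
The plan is to prove the cyclic chain (iii) $\Rightarrow$ (i) $\Rightarrow$ (ii) $\Rightarrow$ (iii). The first link is immediate: since $n! + d \neq d$, the equation $M^{n!+d} = M^d$ directly exhibits two distinct powers of $M$ that coincide, so $M$ is torsion by definition.

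For (ii) $\Rightarrow$ (iii), I intend to show the polynomial divisibility $\ttz^d \pi_n(\ttz) \mid \ttz^{n!+d} - \ttz^d = \ttz^d(\ttz^{n!} - 1)$ in $\Q[\ttz]$. Property~\ref{prop:zm-1} gives $\ttz^{n!} - 1 = \prod_{k \in D_{n!}} \gamma_k(\ttz)$; since every $k \in \{1, 2, \dots, n\}$ divides $n!$, $\pi_n(\ttz)$ divides $\ttz^{n!} - 1$. Multiplying by $\ttz^d$ yields the claim, and any matrix annihilating $\ttz^d \pi_n(\ttz)$ then annihilates $\ttz^{n!+d} - \ttz^d$.

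The crux is (i) $\Rightarrow$ (ii), and it rests on a structure theorem for the minimal polynomial of a torsion matrix: if $M$ is torsion, then $\mu_M(\ttz) = \ttz^e \prod_{k \in K} \gamma_k(\ttz)$ for some $e \in \N$ and some finite $K \subseteq \Nast$. To prove it, pick $p > q \geq 0$ with $M^p = M^q$; then $\mu_M$ divides $\ttz^q(\ttz^{p-q} - 1) = \ttz^q \prod_{k \in D_{p-q}} \gamma_k(\ttz)$ by Property~\ref{prop:zm-1}. By Property~\ref{prop:irreductible}, each $\gamma_k(\ttz)$ is irreducible over $\Q$, and together with $\ttz$ these form pairwise distinct irreducibles in $\Q[\ttz]$. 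Unique factorization in $\Q[\ttz]$ then forces $\mu_M$ to be a power of $\ttz$ times a squarefree product of some $\gamma_k$'s, as claimed.

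To conclude, I translate the degree bound $\deg \mu_M \leq d$ via Property~\ref{prop:degre-phi} into $e + \sum_{k \in K} \phi(k) \leq d$; hence $e \leq d$ and $\phi(k) \leq d$ for every $k \in K$. The hypothesis on $n$ now forces $k \leq n$, so $\mu_M$ divides $\ttz^d \pi_n(\ttz)$, giving (ii). The main obstacle is the structure theorem: everything hinges on irreducibility of the $\gamma_k$ over $\Q$, without which a rational divisor of $\prod_k \gamma_k(\ttz)$ need not be a product of $\gamma_k$'s, and one could not control the minimal polynomial of a torsion matrix by a fixed explicit polynomial depending only on $d$.
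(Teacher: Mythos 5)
Your proof is correct and follows essentially the same route as the paper: the easy implications $(ii)\Rightarrow(iii)\Rightarrow(i)$ via $\pi_n(\ttz)\mid\ttz^{n!}-1$, and then the structural description of the minimal polynomial of a torsion matrix as $\ttz^e\prod_{k\in K}\gamma_k(\ttz)$ via Properties~\ref{prop:zm-1} and~\ref{prop:irreductible}, combined with the degree bound $\deg\mu_M\le d$ and the hypothesis on $\phi$. The only cosmetic difference is that the paper explicitly names the Cayley--Hamilton theorem as the source of $\deg\mu_M\le d$, whereas you invoke it implicitly.
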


\begin{proof}
The implication  $(iii)  \implies (i)$ is clear.
Moreover, it follows from Property~\ref{prop:zm-1} that $\pi_n(\ttz)$ divides $\ttz^{n!} - 1$, and thus 
$\ttz^d \pi_n(\ttz)$ divides $\ttz^{n! + d } - \ttz^d$.
Therefore, $(ii) \implies (iii)$ holds.
Let us now show $(i) \implies (ii)$. 

Assume that $M$ is torsion.
Let $p$, $q \in \N$ be such that $p < q$ and $M^p = M^q$.
Let $\mu(\ttz)$ denote the minimal polynomial of $M$ over $\Q$.
Since $M^q - M^p$ is a zero matrix,
$\mu(\ttz)$  divides 
$\ttz^q - \ttz^p$.
By Property~\ref{prop:zm-1}, 
$\ttz^q - \ttz^p$ can be factorized in the form 
$\ttz^q - \ttz^p =  \ttz^{p} \prod_{j \in D_{q - p}} \gamma_j(\ttz)$;
by Property~\ref{prop:irreductible} all factors are irreducible  over $\Q$.
Hence, $\mu(\ttz)$ can be written in the form 
$\mu(\ttz) = \ttz^k \prod_{j \in J} \gamma_j(\ttz)
$
for some integer $k$ satisfying  $0 \le k \le p$  and some $J \subseteq D_{q - p}$.
Besides, the Cayley-Hamilton theorem implies that  $\mu(\ttz)$ divides the characteristic polynomial of $M$ which is of degree $d$.
Therefore, $d$ is not smaller than the degree of $\mu(\ttz)$.
Since the degree of $\mu(\ttz)$ equals $k + \sum_{j \in J} \phi(j)$ by Property~\ref{prop:degre-phi}, 
we have $k \le d$ and $\max J  \le n$.
Hence,  $\mu(\ttz)$ divides $\ttz^d \pi_n(\ttz)$.
\end{proof}

Combining Propositions~\ref{prop:totient-ineq} and~\ref{th:critere-torsion}, we obtain that for every $d \in \Nast$, a  $d$-by-$d$ matrix over $\Q$ is torsion if and only if it annihilates the polynomial  $\ttz^d \pi_{2 d^2}(\ttz)$.
To conclude the paper, 
it remains to explain how to compute $\pi_{n}(\ttz)$ in $\poly(n)$ time  from any  $n \in \Nast$ taken as input.
The idea is to rely on Proposition~\ref{prop:mult-root}.

\begin{definition} 
For every  $n \in \Nast$, let 
$\displaystyle 
\nu_n(\ttz) \defeq \prod_{j = 1}^n (\ttz^j - 1) 
$.
\end{definition} 

Let $n \in \Nast$.
Clearly, $\nu_n(\ttz)$ is computable in $\poly(n)$ time.
Moreover, it follows from Property~\ref{prop:zm-1} that
$$
\nu_n(\ttz) = \prod_{j = 1}^n \left( \gamma_j(\ttz) \right)^{\floor{ n \sur j }} \, ,
$$
and thus $\nu_n(\ttz)$ and $\pi_n(\ttz)$ have the same roots.
Since  $\pi_n(\ttz)$ has no multiple roots, 
Proposition~\ref{prop:mult-root} yields a way to compute $\pi_n(\ttz)$ from $\nu_n(\ttz)$ in polynomial time:

\begin{proposition} \label{prop:pi-n-nu-n}
For every $n \in \Nast$, $\pi_n(\ttz) \defeq \dfrac{\nu_n(\ttz)}{\gcd(\nu'_n(\ttz), \nu_n(\ttz))}$.
\end{proposition}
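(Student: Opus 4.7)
The plan is to apply Proposition~\ref{prop:mult-root} to $\nu_n(\ttz)$ and match the output against $\pi_n(\ttz)$, using the two facts highlighted in the paragraph preceding the statement.

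First I would make explicit the two ingredients already gathered. On one hand, the factorization $\nu_n(\ttz) = \prod_{j=1}^{n} \gamma_j(\ttz)^{\floor{n/j}}$ (derived from Property~\ref{prop:zm-1}) shows that $\nu_n(\ttz)$ and $\pi_n(\ttz)$ share exactly the same set of complex roots. On the other hand, $\pi_n(\ttz)$ is squarefree: for fixed $j$ the polynomial $\gamma_j(\ttz)$ has only simple roots since the primitive $j$th roots of unity are pairwise distinct, and for $j \neq k$ the polynomials $\gamma_j(\ttz)$ and $\gamma_k(\ttz)$ share no root because a root of unity has a unique multiplicative order. Thus $\pi_n(\ttz) = \prod_{j=1}^n \gamma_j(\ttz)$ has no multiple root.

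Next I would invoke Proposition~\ref{prop:mult-root} on $\nu_n(\ttz)$, adopting the monic convention for $\gcd$. The proposition yields that the polynomial $\nu_n(\ttz) / \gcd(\nu'_n(\ttz), \nu_n(\ttz))$ is monic, has exactly the complex roots of $\nu_n(\ttz)$, and has no multiple root. Combined with the previous paragraph, this polynomial and $\pi_n(\ttz)$ are both monic, share the same root set, and each has only simple roots; hence they agree, which is the desired identity.

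There is no real obstacle: the only point requiring a moment of attention is the squarefreeness of $\pi_n(\ttz)$, which reduces to the elementary observation that distinct cyclotomic polynomials have disjoint root sets in $\C$. Everything else is a direct invocation of Proposition~\ref{prop:mult-root}.
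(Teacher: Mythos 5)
Your proof is correct and follows essentially the same route as the paper: derive the factorization $\nu_n(\ttz) = \prod_{j=1}^{n} \gamma_j(\ttz)^{\floor{n/j}}$ from Property~\ref{prop:zm-1}, observe that $\nu_n$ and $\pi_n$ share the same roots with $\pi_n$ squarefree, and conclude via Proposition~\ref{prop:mult-root}. You spell out a couple of points the paper leaves implicit (disjointness of cyclotomic root sets, monicity of both sides), which is harmless and slightly more rigorous but not a different argument.
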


\section{Comments}  \label{sec:comment}

\paragraph{The failure of the naive approach.}

Combining Propositions~\ref{prop:totient-ineq} and~\ref{th:critere-torsion}, we obtain:

\begin{corollary}[Mandel and Simon {\cite[Lemma~4.1]{MandelS77}}] \label{cor:mandel}
Let $d \in \Nast$.
Every $d$-by-$d$ torsion matrix $M$ over $\Q$ satisfies $M^{(2d^2)! + d} = M^d$. 
\end{corollary}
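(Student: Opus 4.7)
The plan is to observe that this corollary is essentially the $(i) \Longleftrightarrow (iii)$ direction of Proposition~\ref{th:critere-torsion} applied to a specific choice of the auxiliary integer $n$, where that choice is forced by the lower bound on Euler's totient function supplied by Proposition~\ref{prop:totient-ineq}.

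First I would fix $d \in \Nast$ and set $n \defeq 2 d^2$. The task then becomes to verify the hypothesis of Proposition~\ref{th:critere-torsion}, namely that $\phi(m) > d$ for every integer $m > n$. By Proposition~\ref{prop:totient-ineq}, any such $m$ satisfies
$$
\phi(m) \ge \sqrt{m \sur 2} \ge \sqrt{(2 d^2 + 1) \sur 2} > \sqrt{d^2} = d \, ,
$$
so the hypothesis is met. (It is worth noting that the strict inequality $m > n$ in the statement of Proposition~\ref{th:critere-torsion} is precisely what is needed: at $m = 2 d^2$ the totient bound only yields $\phi(m) \ge d$, not $\phi(m) > d$.)

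Once that verification is in place, the corollary follows immediately from the implication $(i) \implies (iii)$ of Proposition~\ref{th:critere-torsion}: if the $d$-by-$d$ matrix $M$ over $\Q$ is torsion, then
$$
M^{n! + d} = M^d \, ,
$$
which, after substituting $n = 2 d^2$, is exactly the claimed identity $M^{(2 d^2)! + d} = M^d$.

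Since the entire argument is a two-line combination of already-proved facts, there is no genuine obstacle. The only point that requires a moment of care is the choice of $n = 2 d^2$ (as opposed to, say, $2 d^2 - 1$), which must be made large enough so that the totient bound yields the \emph{strict} inequality $\phi(m) > d$ required by Proposition~\ref{th:critere-torsion}.
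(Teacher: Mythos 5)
Your proof is correct and follows exactly the route the paper intends: the paper presents the corollary with the one-line justification ``combining Propositions~\ref{prop:totient-ineq} and~\ref{th:critere-torsion},'' and your argument supplies precisely that combination, taking $n = 2d^2$, checking that $m > 2d^2$ forces $\phi(m) \ge \sqrt{m/2} > d$, and then invoking the implication $(i) \implies (iii)$ of Proposition~\ref{th:critere-torsion}.
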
 

It follows from Proposition~\ref{cor:mandel} that the MTP is decidable.
However, such an approach does not yield a polynomial-time algorithm for the MTP:

\begin{proposition} \label{prop:exponential}
Let $t : \Nast \to \Nast$ be a  function such that for each $d \in \Nast$, every $d$-by-$d$ torsion  matrix  $M$ over $\Q$ satisfies $M^{t(d) + d} = M^d$.
Then, $t$ has exponential growth.
\end{proposition}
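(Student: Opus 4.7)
The plan is to exhibit, for each $d \in \Nast$, a $d$-by-$d$ torsion matrix $M_d$ over $\Q$ whose multiplicative order grows super-polynomially in $d$. Every such $M_d$ will be invertible (its minimal polynomial having no zero root), so the hypothesis $M_d^{t(d)+d} = M_d^d$ forces $M_d^{t(d)} = I$; hence $\ord(M_d)$ divides $t(d)$ and $t(d) \ge \ord(M_d)$. A super-polynomial lower bound on $\ord(M_d)$ therefore transfers to $t$.

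For the construction, fix $d$ and let $p_1 < p_2 < \cdots < p_k$ be the longest initial segment of primes with $\sum_{i=1}^k (p_i - 1) \le d$. For each $i$, let $C_i$ denote the companion matrix of the cyclotomic polynomial $\gamma_{p_i}(\ttz)$: it is a $(p_i - 1)$-by-$(p_i - 1)$ integer matrix whose minimal polynomial is $\gamma_{p_i}(\ttz)$. Take $M_d$ to be the block-diagonal matrix with blocks $C_1, C_2, \dotsc, C_k$, completed by an identity block if necessary so that its total size is $d$-by-$d$. By Properties~\ref{prop:zm-1} and~\ref{prop:irreductible}, $\gamma_{p_i}(\ttz)$ divides $\ttz^n - 1$ if and only if $p_i \mid n$, so each $C_i$ has order exactly $p_i$; since the $p_i$ are pairwise coprime and the identity block has order $1$, the block-diagonal matrix $M_d$ has order $\prod_{i=1}^k p_i$.

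It remains to bound this product from below. By Chebyshev's classical bounds, $\sum_{i=1}^k (p_i - 1) = \Theta(k^2 \log k)$, so the largest feasible $k = k(d)$ grows as $\Theta\!\left(\sqrt{d/\log d}\right)$; and $\prod_{i=1}^k p_i = \exp(\theta(p_k)) = \exp(\Omega(k \log k))$ by the prime number theorem. Combining the two, $t(d) \ge \exp\!\left(\Omega\!\left(\sqrt{d \log d}\right)\right)$, which dominates every polynomial. The main step is designing a construction that converts a linear budget on $\sum \phi(n_i)$ into a super-polynomial product of orders; once that construction is made, the matrix-theoretic verification is immediate from the cyclotomic-polynomial facts already in the excerpt and the asymptotic estimate is a routine application of prime-counting bounds.
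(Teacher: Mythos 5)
Your construction and its analysis are correct as far as they go, but they prove a strictly weaker statement than the proposition asserts. The bound you obtain is $t(d) \ge \exp\bigl(\Omega\bigl(\sqrt{d \log d}\,\bigr)\bigr)$, which is super-polynomial but \emph{sub-exponential}: it does not give $t(d) \ge c^d$ for any $c > 1$, which is what ``exponential growth'' means here (and what the paper's proof actually delivers). Moreover this shortfall is intrinsic to your single-matrix strategy. The order of any non-singular $d$-by-$d$ torsion matrix over $\Q$ is $\mathrm{lcm}(n_1, \dotsc, n_r)$ for some $n_i$ with $\sum_i \phi(n_i) \le d$, and the maximum of such an lcm is the rational analogue of Landau's function, which grows like $\exp\bigl(\Theta\bigl(\sqrt{d \log d}\,\bigr)\bigr)$. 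So no choice of a single witness matrix per dimension $d$ can force an exponential lower bound on $t(d)$.

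The missing idea is that you are not restricted to one matrix per $d$: the hypothesis says \emph{every} $d$-by-$d$ torsion matrix $M$ satisfies $M^{t(d)+d} = M^d$, so $t(d)$ must be a common multiple of the orders of \emph{all} non-singular $d$-by-$d$ torsion matrices simultaneously. The paper exploits this by observing that for each integer $k$ with $1 \le k \le d$ there is a $d$-by-$d$ permutation matrix of order exactly $k$ (a $k$-cycle padded with fixed points); hence $k \mid t(d)$ for every such $k$, and therefore $\ell(d) \defeq \mathrm{lcm}(1, 2, \dotsc, d)$ divides $t(d)$. Since $\ell(2n) \ge \binom{2n}{n} \ge 2^n$, this yields $t(d) \ge 2^{\lfloor d/2 \rfloor}$, i.e.\ genuine exponential growth. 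Your companion-matrix blocks are a perfectly good (and even slightly more economical, using $\phi(p) = p-1$ rows rather than $p$) way to realize cyclic orders, but the decisive step is to take the lcm over a family of matrices rather than the order of a single one.
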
 

\begin{proof}
For every $n \in \Nast$, let $\ell(n)$ denote the least common multiple of all positive integers less than or equal to $n$: $\ell(3) = 6$, $\ell(4) = 12$, $\ell(5) = \ell(6) = 60$, \emph{etc}.
It is well-known that $\ell$ has exponential growth:
for every $n \in \Nast$,  $\ell(2n) \ge \binom{2n}{n} \ge 2^n$  \cite{HardyWright}.

For every $d$-by-$d$ non-singular  matrix $M$ over $\Q$, $M^{t(d)}$ is the identity matrix.
Besides, for every  integer $k$ with $1 \le k \le d$, there exists a  $d$-by-$d$ permutation matrix that generates a cyclic group of order $k$, and thus $k$ divides $t(d)$.
It follows that $\ell(d)$ divides $t(d)$.
\end{proof}

\paragraph{Reducing the MTP to the MPP.} 
For the sake of completeness, let us describe the reduction from the MTP to the MPP.
Let $d \in \Nast$ and let $M$ be a $d$-by-$d$ matrix over $\Q$. 
Let 
$N_2  \defeq \twotwo{0}{1}{0}{0}$, 
$A  \defeq \twotwo{M^d}{O}{O}{N_2}$,
$O_2  \defeq \twotwo{0}{0}{0}{0}$ 
and 
$B \defeq \twotwo{M^d}{O}{O}{O_2}$
where $O$ denotes  both the $d$-by-two zero matrix and its transpose. 
It is clear that $A$ and $B$ are two $(d + 2)$-by-$(d + 2)$ matrices over $\Q$.
Moreover, there exists $n \in \N$ such that $A^n = B$ if and only if $M$ is torsion
 \cite{CassaigneN08}.

\section{Open question}

Let $d \in \Nast$.
It follows from Corollary~\ref{cor:mandel} that for every $d$-by-$d$ torsion matrix $M$ over $\Q$, 
 the sequence $(M^d, M^{d + 1}, M^{d + 2}, M^{d + 3}, \dotsc)$ is (purely) periodic with period at most $(2d^2)!$.
 Hence, the maximum cardinality of $\{ M^d, M^{d + 1}, M^{d + 2}, M^{d + 3}, \dotsc \}$, over all $d$-by-$d$ torsion matrices $M$ over $\Q$, is well-defined.
To our knowledge, its asymptotic  behavior as $d$ goes to infinity is unknown.

\bibliography{bibmat}
\bibliographystyle{plain}


\section*{Proof of Proposition~\ref{prop:totient-ineq}}

The following two properties of Euler's totient function are well-known.

\begin{property} \label{prop:ppow}
For every prime number $p$ and every  $v \in \Nast$,
$\phi(p^v) =  p^{v - 1} (p - 1)$. 
\end{property} 

\begin{proof}
For every integer $k$, $\gcd(k, p^v)$ is distinct from one if and only if $p$ divides $k$.
From that we deduce the equality 
\begin{equation}  \label{eq:pr-pr}
\left\{ k \in \{ 1, 2, \dotsc, p^v \} : \gcd(k, p^v) \ne 1 \right\} 
= \left\{ p q : q \in \{ 1, 2, \dotsc, p^{v - 1} \} \right\} \, .
\end{equation} 
Besides, it is easy to see that the left-hand side of Equation~\eqref{eq:pr-pr} has cardinality $p^v - \phi(p^v)$ while its right-hand side has cardinality  $p^{v - 1}$. 
\end{proof} 

\begin{property}\label{prop:mult} 
For every $m$, $n \in \Nast$,  
$\phi(mn) = \phi(m) \phi(n)$ whenever  $\gcd(m, n) = 1$.
\end{property} 

Property~\ref{prop:mult} is consequence of the Chinese remainder theorem.
It states that $\phi$ is \emph{multiplicative}.

\begin{lemma} \label{lem:golden}
For every real number $x \ge 3$, 
$\sqrt{x}$ is less than $x - 1$.
\end{lemma}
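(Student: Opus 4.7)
The plan is to reduce the inequality to a polynomial condition by squaring. Since $x \ge 3$ implies $\sqrt{x} > 0$ and $x - 1 \ge 2 > 0$, the inequality $\sqrt{x} < x - 1$ is equivalent, after squaring both positive sides, to $x < (x-1)^2$, which rearranges to $x^2 - 3x + 1 > 0$.

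I would then analyze the quadratic $q(\ttz) \defeq \ttz^2 - 3\ttz + 1$ directly. Its roots are $(3 \pm \sqrt{5})/2$, and the larger root $(3 + \sqrt{5})/2$ is strictly less than $3$ because $\sqrt{5} < 3$. Hence $q(x) > 0$ for all $x \ge 3$, which is exactly what is required. As an aside explaining the name of the lemma, the larger root equals $\phi^2$ where $\phi = (1+\sqrt{5})/2$ is the golden ratio: indeed $\phi$ satisfies $\phi^2 = \phi + 1$, so at $x = \phi^2$ one has $\sqrt{x} = \phi = \phi^2 - 1 = x - 1$, and $x = 3$ sits just above this borderline.

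There is essentially no obstacle: the only mild care point is to justify squaring, which is immediate once one observes that both sides are positive on the stated domain. A calculus-based alternative would be to verify the inequality at $x = 3$ directly (namely $\sqrt{3} < 2$) and then note that the function $x \mapsto (x-1) - \sqrt{x}$ has derivative $1 - 1/(2\sqrt{x})$, which is positive on $[3, \infty)$, but I prefer the quadratic route as it is purely algebraic and avoids any appeal to monotonicity.
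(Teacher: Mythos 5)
Your proof is correct and takes essentially the same approach as the paper: both arguments reduce the inequality to locating the roots of a quadratic. The paper substitutes $y = \sqrt{x}$ and studies $y^2 - y - 1$ (whose positive root is the golden ratio, hence the lemma's name), while you square the original inequality to obtain $x^2 - 3x + 1 > 0$; these are algebraically equivalent routes, and your side remark about the golden ratio correctly ties the two together.
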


\begin{proof}
The two roots of the quadratic polynomial $\ttz^2 - \ttz - 1$ are $\frac{1 + \sqrt{5}}{2} \approx 1.618$ and $\frac{1 - \sqrt{5}}{2} \approx -0.618$;
 they  are  both smaller than $\sqrt{3} \approx 1.732$.
Therefore, $y^2 - y - 1$ is positive for every real number $y  \ge \sqrt{3}$.
Since for every real number $x \ge 3$, $\sqrt{x}$ is not less than $\sqrt{3}$, $(x - 1) - \sqrt{x} = \left( \sqrt{x} \right)^2 - \sqrt{x} - 1$ is positive.
\end{proof}

\begin{lemma} \label{claim:ineg-phi}
Let $p$ and $v$ be two integers with $p \ge 2$ and $v \ge 1$.
Inequality $p^{v / 2} \le p^{v - 1} (p - 1) $ holds if and only if $(p, v) \ne (2, 1)$.
\end{lemma}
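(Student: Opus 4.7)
The plan is to rewrite the inequality in a form where it depends on $v$ in a simple monotone way, and then split into two cases according to whether $v=1$ or $v \ge 2$, invoking Lemma~\ref{lem:golden} for the borderline case.

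First I would divide both sides of the asserted inequality $p^{v/2} \le p^{v-1}(p-1)$ by the positive quantity $p^{v-1}$, thereby reducing the claim to the equivalent statement
$$
p^{1 - v/2} \le p - 1 \, .
$$
This reformulation is convenient because the left-hand side is a decreasing function of $v$, so intuitively the inequality should hold for all sufficiently large $v$ and only be delicate for $v=1$.

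Next I would handle the easy range $v \ge 2$: then $1 - v/2 \le 0$, so $p^{1-v/2} \le p^0 = 1$, and since $p \ge 2$ we have $p - 1 \ge 1$, which gives the desired inequality with room to spare. This takes care of all pairs with $v \ge 2$, regardless of the prime $p$.

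It remains to treat $v = 1$, where the inequality becomes $\sqrt{p} \le p - 1$. For $p \ge 3$ this is exactly the conclusion of Lemma~\ref{lem:golden} applied to $x = p$, so the inequality holds. For the remaining pair $(p,v) = (2,1)$ one checks by direct computation that $p^{v/2} = \sqrt{2} > 1 = p^{v-1}(p-1)$, so the inequality fails. Combining the three cases proves both directions of the ``if and only if''. There is no real obstacle here: the only non-trivial ingredient is Lemma~\ref{lem:golden}, which is already established, and the rest is case analysis.
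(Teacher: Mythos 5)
Your proof is correct and follows essentially the same case analysis as the paper's: direct verification for $(p,v)=(2,1)$, an easy estimate for $v \ge 2$, and an appeal to Lemma~\ref{lem:golden} for $v=1$, $p \ge 3$. The only cosmetic difference is that you first divide by $p^{v-1}$ to normalize the inequality to $p^{1-v/2} \le p-1$, whereas the paper works with the original form and uses $v/2 \le v-1$ directly; both are equivalent and equally clear.
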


\begin{proof}
If $(p, v) = (2, 1)$ then  $p^{v \sur  2} = \sqrt{2}$ is greater than $1 = p^{v - 1} (p - 1)$.
If $v \ge 2$ then  $v \sur  2 \le v - 1$, 
and thus  $p^{v \sur  2} \le  p^{v - 1} \le p^{v - 1} (p - 1)$ follows.
If $v = 1$ and $p \ge 3$ then $p^{v \sur  2} = \sqrt{p}$ is less than $p - 1 = p^{v - 1} (p - 1)$ according to Lemma~\ref{lem:golden}.
\end{proof}

\begin{lemma} \label{lem:phi-sqrt}
Let $n \in \Nast$.
If $n$ is odd or if four divides $n$ then  $\phi(n)$ is greater than or equal to $\sqrt{n}$.
\end{lemma}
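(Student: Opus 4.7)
The plan is to factor $n$ into prime powers and use multiplicativity (Property~\ref{prop:mult}) to reduce the bound $\phi(n) \ge \sqrt{n}$ to the prime-power case, which is exactly what Lemma~\ref{claim:ineg-phi} handles.

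First I would write the prime factorization $n = \prod_{i=1}^r p_i^{v_i}$ with distinct primes $p_i$ and $v_i \in \Nast$. Since the factors $p_i^{v_i}$ are pairwise coprime, iterating Property~\ref{prop:mult} yields $\phi(n) = \prod_{i=1}^r \phi(p_i^{v_i})$, and Property~\ref{prop:ppow} gives $\phi(p_i^{v_i}) = p_i^{v_i-1}(p_i-1)$. Meanwhile $\sqrt{n} = \prod_{i=1}^r p_i^{v_i/2}$.

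Next I would invoke Lemma~\ref{claim:ineg-phi} pointwise: for each index $i$, the inequality $p_i^{v_i/2} \le p_i^{v_i-1}(p_i - 1)$ holds provided $(p_i, v_i) \ne (2, 1)$. Here the hypothesis that $n$ is odd or that $4$ divides $n$ is exactly what rules out the exceptional pair: if $n$ is odd then no $p_i$ equals $2$, and if $4 \mid n$ then the exponent of $2$ in the factorization of $n$ is at least $2$, so the pair $(2, 1)$ never appears. Multiplying the valid pointwise inequalities yields $\sqrt{n} \le \phi(n)$.

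There is no serious obstacle: the lemma is a straightforward bookkeeping argument once Lemma~\ref{claim:ineg-phi} is in hand. The only point that requires care is verifying that the hypothesis on $n$ translates precisely into avoiding the single bad pair $(p, v) = (2, 1)$ in every prime-power factor, which is exactly why the statement excludes $n \equiv 2 \pmod{4}$.
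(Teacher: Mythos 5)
Your proof takes essentially the same route as the paper: factor $n$ into prime powers, apply Properties~\ref{prop:ppow} and~\ref{prop:mult} to write $\phi(n)$ as a product, invoke Lemma~\ref{claim:ineg-phi} on each factor using the hypothesis to rule out $(p,v)=(2,1)$, and multiply. The only cosmetic difference is that the paper treats $n=1$ separately before writing the (nonempty) factorization, whereas your argument implicitly relies on the empty-product convention for that case.
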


\begin{proof}
It is clear that $\phi(1) = 1 = \sqrt{1}$.
Let $n$ be an integer greater than one.
Write $n$ in the form 
$$
n = \prod_{i = 1}^r p_i^{v_i}
$$ 
where $r$, $v_1$, $v_2$, \ldots, $v_r$ are positive integers and where $p_1$, $p_2$, \ldots, $p_r$ are pairwise distinct prime numbers.
Properties~\ref{prop:ppow} and~\ref{prop:mult} yield:
$$
\phi(n) =  \prod_{i = 1}^r \phi(p_i^{v_i}) =  \prod_{i = 1}^r p_i^{v_i - 1} (p_i - 1) \, .
$$ 
Assume  either that $n$ is odd or that four divides $n$.
Then, for each index $i$ with  $1 \le i \le r$, $(p_i, v_i)$ is distinct from $(2, 1)$ and thus inequality $p_i^{v_i \sur  2} \le p_i^{v_i - 1} (p_i - 1)$ holds. 
From that we deduce 
$$
\phi(n) \ge \prod_{i = 1}^r  p_i^{v_i \sur  2}  = \sqrt{n} \, .
$$
\end{proof} 

\begin{proof}[Proof of Proposition~\ref{prop:totient-ineq}]
If $n$ is odd or if four divides $n$ then Lemma~\ref{lem:phi-sqrt} ensures $\phi(n) \ge \sqrt{n} \ge  \sqrt{n / 2}$.
Conversely,
 assume  that $n$ is even and that four does not divide $n$: 
 there exists an odd integer $n'$ such that $n = 2 n'$. 
We have 
\begin{itemize}
 \item $\phi(n) = \phi(2) \phi(n') =  \phi(n')$ according to Property~\ref{prop:mult}, 
 and 
\item $\phi(n') \ge \sqrt{n'} = \sqrt{n / 2}$ by Lemma~\ref{lem:phi-sqrt}. 
\end{itemize}
From that we deduce $\phi(n) \ge  \sqrt{n / 2}$.
\end{proof}

\end{document}